\newtheorem{theorem}{Theorem}[section]
\author{
Lei Tang\\
\affaddr{Mountain View,  CA 94041, USA}\\
\email{leitang@acm.org}
}
\title{Thresholding for Top-k Recommendation \\ with Temporal Dynamics}
\begin{document}
\maketitle

\begin{abstract}
  This work focuses on top-$k$ recommendation in domains where underlying data distribution
  shifts overtime.   We propose to learn a time-dependent bias for
  each item 
  over whatever existing recommendation engine. Such a bias learning
  process alleviates data
  sparsity in constructing the engine, and at the same time captures recent
  trend shift observed in data. 
 We present an alternating optimization
  framework to resolve the bias learning problem,
  and 
  develop methods to handle a variety of commonly used
  recommendation evaluation criteria,  as well as  large number of items
  and users in practice. 
  The proposed algorithm is examined, both offline and online,  using
  real world data sets collected from a retailer website. 
  Empirical results
  demonstrate that the bias learning can almost always 
  boost recommendation performance. We encourage other practitioners
  to adopt it as a standard component in recommender systems where
  temporal dynamics are a norm.

\end{abstract}

\category{H.2.8}{Information Technology and Systems }{Database
  Applications}[Data Mining] \category{H.3.3}{Information Storage and
  Retrieval}{Information Filtering} \terms{Algorithms,
  Experimentation, Performance}

\keywords{bias learning, temporal dynamics, top-$k$ recommendation}

\section{Introduction}

Recommender systems have been extensively studied in different domains
including eCommerce~\cite{linden2003amazon}, movie/music
ratings~\cite{Kore-etal09-matrix}, news
personalization~\cite{Das-etal07-google},  content recommendation at
web portals~\cite{Agar-etal13-content}, etc.  And all sorts of methods
have been proposed for recommendation~\cite{Adom-Tuzh05-toward},
including content-based methods, neighborhood based
approaches~\cite{Desr-Kary11-comprehensive}, 
latent factor models like SVD or matrix
factorization~\cite{Kore-etal09-matrix}.  
We notice that many methods assume data in
recommender systems is
static or follows the same distribution. 
Hence, a significant body of
existing works evaluate their proposed models following a
cross-validation procedure by randomly hiding some entries in a
user-item matrix for testing. 

\begin{figure}[t]
  \centering \vskip -0.2in
  \includegraphics[width=0.4\textwidth]{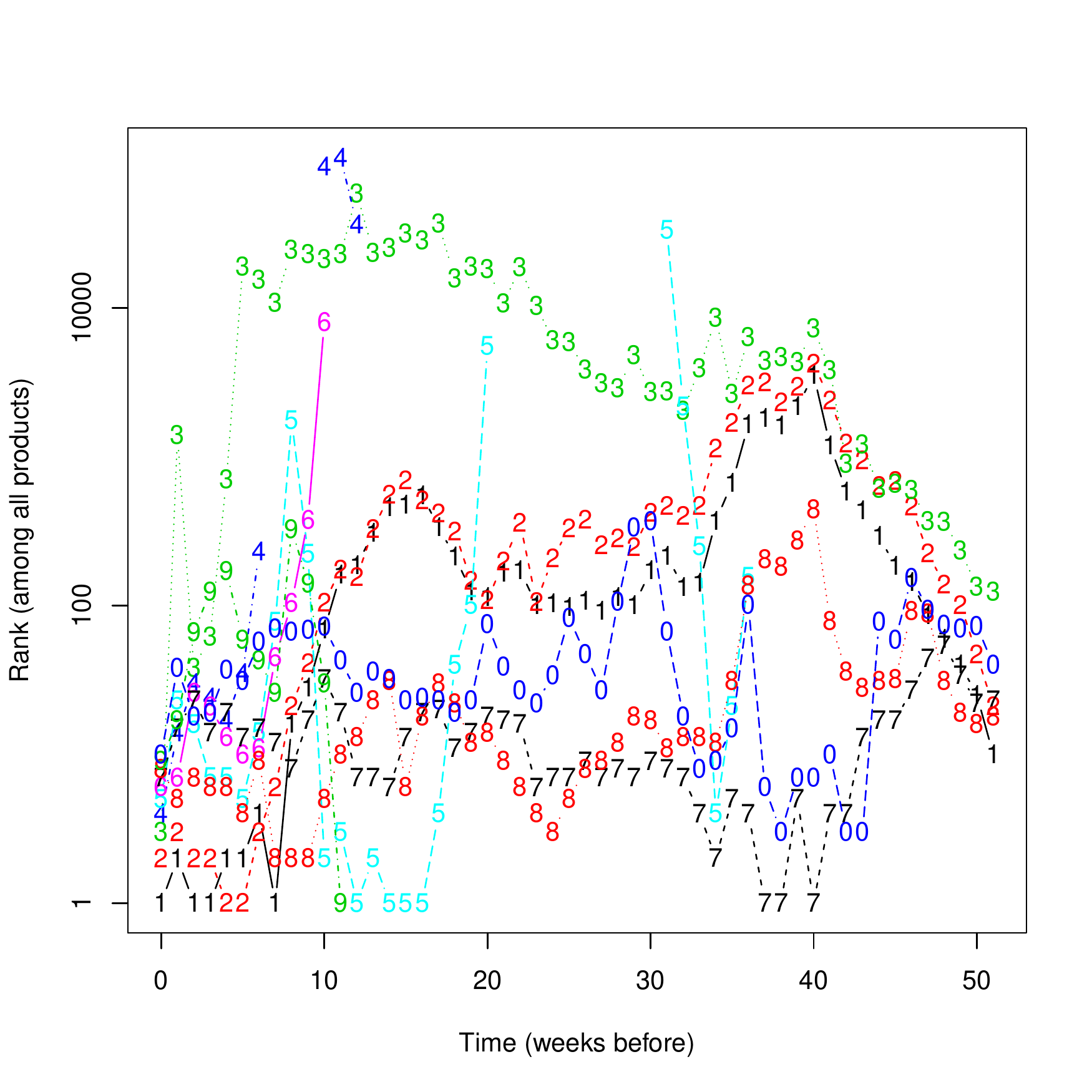}
  \vskip -0.2in
  \caption{Rank changes in past one year of top 10 popular products in
    early September, 2013. The relative ranking of products were
    determined by sales volume.  The number of each curve represent
    the product ranking at early september.  To make it legible, $0$
    indicates $10$. (Better viewed in color)}
  \vskip -0.2in
  \label{fig:rank}

\end{figure}

However, this is not the case for recommender
systems where user feedbacks are collected over time. In this paper,
we take one real-world eCommerce website (denoted as XYZ) as an example. 
The fundamental task is to recommend relevant items to customers,
hopefully to increase purchases and thus revenues and profits for the
company in the long run.  It is observed that the trending products
change week by week, or even day by day, due to user interest change,
demand shift ignited by some external events, or simply because a
product is out of inventory or just on shelf. 
Figure~\ref{fig:rank} demonstrates the fluctuation of
relative ranking (determined by sales volume) of the top 10 popular
products at XYZ 
identified at the first week of September, 2013.\footnote{All
privacy-sensitive items and services oriented products and warranties are removed for analysis.}  Notice that some
portion of the lines are missing for certain weeks in the figure,
indicating no transaction at the corresponding moments.  Among the 10
products, only 6 were sold one year ago, The other 4 products were not
even sold at the website yet.  The variance is huge as observed in the
figure. For instance, the 3rd popular product were ranked after
$10,000$ even just few weeks ago.  Majority of the select products,
did not enter top-$10$ mostly in the past.

The temporal dynamics described above pose thorny challenges for recommendation,
because it violates the fundamental assumption in most collaborative
filtering that \emph{training and test data share the same
  distribution}.  Simply ignoring this discrepancy will lead to bad
user experience . For instance, we may recommend a product that is
popular one year ago, but now has been replaced by new models.  This
is particular common for products on promotion during holiday season.
Another na\"{i}ve solution to deal with the discrepancy is to restrict
the model training to consider transactions only occurring the past few
days so that the test distribution would not change too much from that
of training.
However, that essentially discards plenty of past transaction
information for model training, leading to a more severe data
sparsity problem.

Data sparsity have been widely observed in many different domains.
At XYZ, for example, majority of customers purchase only few items across a whole year, and
majority of products have relatively small number of transactions.  It
is not a wise choice to discard data for model training. 
With fewer
data samples, the estimation of related parameters in the model will
be coupled with high variance. 
Moreover, a shorter time-window typically results in a
smaller coverage of items that can be leveraged for recommendation
since some items may not appear in the select window. In Figure~\ref{fig:markov_ri}, we plot the relative performance
improvement of a Markov model\footnote{Details and evaluation criteria are described later. }
when we expand the time window of training to past 1 month, 3 months,
6 months and 15 months respectively. The baseline is the model
trained using the past 1-week transactions only.  
Apparently, the larger the time window (and thus the more data)
we use for training, the better the performance is,  
implying that we should collect as
much data for training as possible.

\begin{figure}[t]
  \centering
  \includegraphics[width=.4\textwidth]{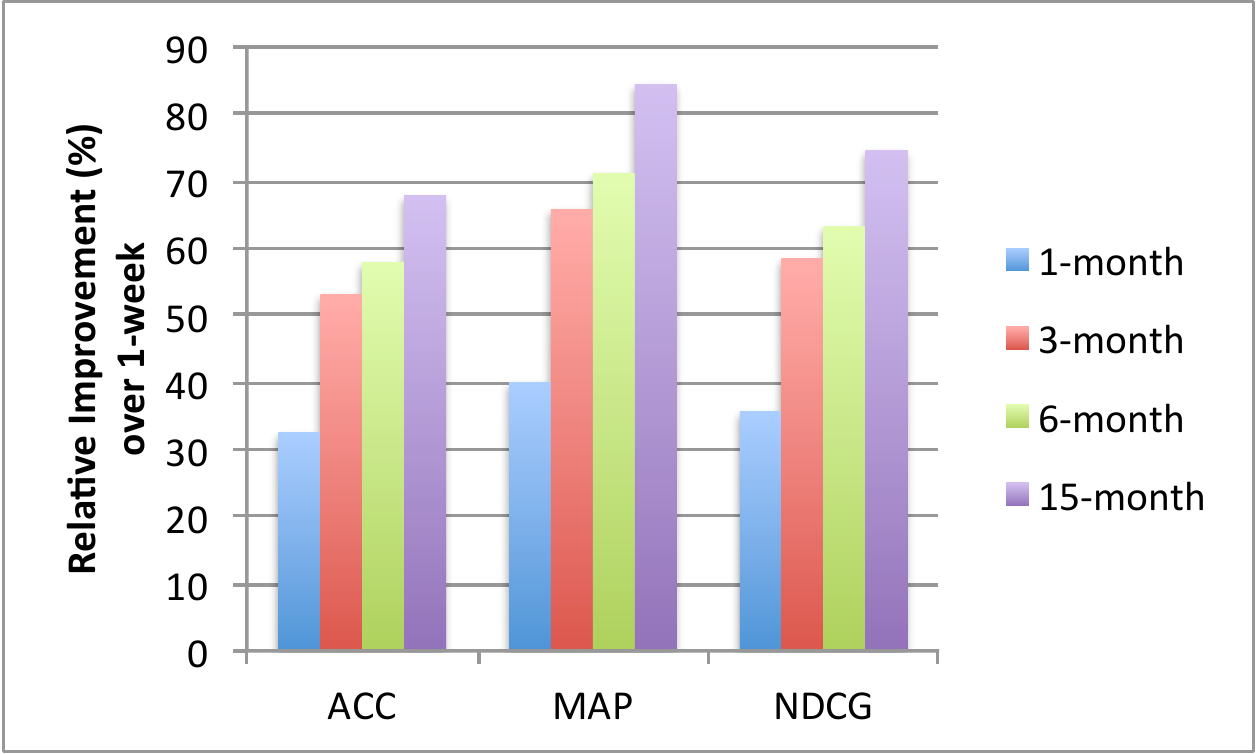}
  \vskip -0.1in
  \caption{Relative improvement of recommendation with different time
    windows for model training}
  \label{fig:markov_ri}
  \vskip -0.15in
\end{figure}


Hence we have the following dilemma: on one hand, we want to exploit
all possible transactions to overcome sparsity; on the other hand, we
wish to capture the trend change in recommendation by examining recent
transactions only.  How can we capture temporal dynamics without
discarding data for training?
In this work, we propose to learn
item-specific biases in order to capture the temporal trend change for
recommendation, while the recommendation model itself is still
constructed using as much data as possible.  We define a \emph{bias
  learning} problem and then propose one algorithm to optimize biases
for given evaluation criteria of top-$k$ recommendation.  Its
convergence properties and time complexity are carefully examined.
Experiments demonstrate the bias learning process will almost
always improve the performance of base recommendation model, by keeping pace with the temporal
dynamics of user-item interactions.

\section{Top-$k$ Selection and Evaluation}
Before proceeding to the bias learning problem, we need to review the
commonly used procedure in top-$k$ recommendation and corresponding
evaluation measures.  Without loss of generality, we will adopt
standard terms {\it user} and {\it item} to speak about the
recommendation problem throughout the paper .  We assume there are in
total $n$ users, $m$ items.  $u, v$ are used to denote the index for
users, $i, j$ for items, and $p,q$ the rank positions of items in
top-$k$.  More symbols and their meanings can be referred in
Table~\ref{tab:symbols}.

\subsection{Top-$k$ Selection}
In recommender systems, the number of items to recommend is typically
bounded by certain number.  For example, in email marketing, the
number of items can be recommended is limited by the email
template. 
Assuming that we need to select $k$ items for a given user $u$, a
common practice is to have an recommendation engine yielding a
prediction score for each item as follows:
\begin{eqnarray}
  \mathbf{f}_{u*}=  ( f_{u1}, f_{u2}, \cdots, f_{up})^T \label{eq:1}
\end{eqnarray}
where $f_{ui}$ denotes the prediction score of item $i$ for user $u$.
Note that some recommendation models yield scores only for a select
set of candidate items, with others default to $0$.
Then items are ranked according to their prediction scores and the top
ranking ones will be selected\footnote{We assume items are selected
  solely based on prediction scores, while researchers have been
  considering other factors like diversity~\cite{Yu-etal09-it}, which
  is beyond the scope of this work. }.
Equivalently, we find an \emph{ordered} set of $k$ items with maximal
scores:
\begin{eqnarray}
  \mathcal{R}(\mathbf{f}_{u*}) &=&  \{I^{(1)}, I^{(2)}, \cdots,
  I^{(k)}\}  \label{eq:topk-selection} \\
  & s.t. & f_u^{(1)} \ge f_u^{(2)} \ge \cdots \ge f_u^{(k)}, \nonumber \\ 
  &&  f_{u}^{(p)} \ge f_{u j},  \quad \forall p \in \{1, 2,
  \cdots, k\}, j \notin \mathcal{R}(\mathbf{f}_{u*}) \nonumber 
\end{eqnarray}
where $f_u^{(p)} = f_{u I^{(p)}}$ is the score of the top $p$-th item.

\subsection{Evaluation Criteria}
\label{sec:evaluation-criteria}
As for evaluation, the top ranking items
$\mathcal{R}(\mathbf{f}_{u*})$ and the relevance of items have to be
provided.  Different from typical ratings in collaborative filtering,
we focus on recommendations where there are only \emph{binary}
responses: relevant or irrelevant.  In our application, we deem one
item to be relevant for a user when the user purchases the item.  Let
$y_{ui} \in \{0, 1\}$ denote the relevance of item $i$ for a
particular user $u$, with $0$ being irrelevant, and $1$ relevant.
For presentation convenience, we use $y_u^{(p)}$ to denote the
relevance of the $p$-th top ranking item for $u$.

For top-$k$ recommendation, standard evaluation criteria include
\emph{accuracy} (ACC), \emph{mean average precision} (MAP), and
\emph{normalized discounted cumulative gain} (NDCG).  
Since all of them compute
performance with respect to top-$k$ recommendation of individual user and then average
across all users, we shall just describe them with respect to one
user.


Accuracy measures how many items in the top-$k$ recommendation are
indeed relevant.
\begin{equation}
  ACC@k (u) = \frac{1}{k}\sum_{ p =1}^k  y_u^{(p)}. \label{eq:ACC}
\end{equation}
\emph{Accuracy} does not care about position of items once  they
enter into top-$k$ recommendation.
By contrast, the other two measures \emph{average precision} ($AP@k$)
and \emph{normalized discounted cumulative gain} ($NDCG@k$) take into
account item position in the top-$k$ recommendation as well.

The precision up to rank $p$ (denoted as $Precision(p,u)$) is
essentially $ACC@p(u)$.
AP is the average of precision at positions of those relevant items:
\label{tab:utility-change}
\begin{eqnarray}
  \label{eq:AP}
  AP@k(u) & = &    \frac{\sum_{p=1}^k \left [ y_u^{(p)} \cdot Precision(p, u)
    \right ] } { \min
    \{ k, \text{\# relevant items for $u$}\}}.
\end{eqnarray}
The denominator in Eq.~(\ref{eq:AP}) is to account for cases when
users are associated with different numbers of relevant items.
Average precision is $1$ when all relevant items are ranked at the
top. When averaging AP@k for all users, we obtain \emph{mean average
  precision} (MAP@k).

Discounted cumulative gain (DCG) is initially proposed for rankings
with different degree of relevance:
\begin{equation}
  DCG @ k (u)  =\sum_{p=1}^k \left [   \frac{ 2^{y_u^{(p)}} - 1 }{
      log(1+p)  }\right ]   
  =  \sum_{p=1}^k \left [   \frac{y_u^{(p)} }{ log(1+j)  }\right ]  \label{eq:DCG}
\end{equation}
Eq~(\ref{eq:DCG}) follows because $y_u^{(p)} \in \{0, 1\}$ in our
case.
However, DCG generally varies with respect to $k$ and number of
relevant items for $u$ , making it difficult for comparison.  Hence,
\emph{normalized discounted cumulative gain} (NDCG) is proposed to
normalize the DCG into $[0, 1]$:
\begin{equation}
  NDCG @ k (u)  = \left [   \frac{y_u^{(p)} }{ log(1+j)  }\right ]
  /Z_{ku}, \label{eq:NDCG}
\end{equation}
where $Z_{ku}$ is the maximal DCG, i.e., the DCG when all relevant
items are ranked at the top.

Once we have the performance measure ($ACC@k$, $AP@k$, or $NDCG@k$)
for each individual user, the overall performance for a set 
users $\mathcal{U}$ can be computed as the mean:
\begin{equation}
  \label{eq:perf}
  perf@k(\mathcal{U})= \frac{1}{n}\sum_{u=1}^n  perf@k(u). 
\end{equation}
Since the ranking of items are determined based on score predictions,
and so are the corresponding performance, for presentation
convenience we can
rewrite 
Eq.~(\ref{eq:perf}) as
\begin{displaymath}
  perf(\mathbf{f}_{1*}, \mathbf{f}_{2*}, \cdots, \mathbf{f}_{n*}) = perf@k(\mathcal{U}), 
\end{displaymath}
where $\mathbf{f}_{u*}$ 
denotes the prediction scores for user $u$.

\section{The Bias Learning Problem}
As mentioned in introduction, recommendation in eCommerce, on one
hand, suffers from data sparsity, hence it is imperative to train the
recommendation model with transaction history of as long as possible.
On the other hand, the temporal dynamics of consumer purchase lead us
to weigh more for those items that are recently trending.  Such
fluctuation can be attributed to all kinds of factors.  For example,
retailer itself occasionally may post products with huge discount for
promotion.  External events like a nation-wide snow storm is very
likely to lead to booming purchases of heaters.
While capturing all varieties of external factors can be one way to
demystify the temporal dynamics, it is challenging to encompass them
all.  Alternatively, we take one data-based approach to learn a bias
for each item. In particular, we collect transactions in recent few
days or weeks, and attempt to determine item-specific biases such that
certain evaluation measure is maximized.
The problem can be formally stated as follows:
\begin{quote}
  {\bf Given}: \begin{itemize}
  \item  a set of users $\mathcal{U}$;
  \item  evaluation criterion $perf$;
  \item recent relevance information $Y$ of items for users in
    $\mathcal{U}$, with $y_{ui}$ denoting the relevance of item $i$
    for user $u$;
 \item prediction scores  $F$ for the set of users
    $\mathcal{U}$, with $f_{ui}$ being the score of item $i$ for $u$;
 \end{itemize}
 {\bf Find}: a vector of biases $\mathbf{b} = (b_1,
  b_2, \cdots, b_m)^T$ with $b_i$ indicating the bias of item
  $i$ so that
\begin{itemize}
  \item 
the top-k recommendation are selected as
    $\mathcal{R}(\mathbf{f}_{u*} + \mathbf{b})$ according to
    Eq.~(\ref{eq:topk-selection});
  \item the corresponding performance
    \begin{equation}
      perf(\mathbf{f}_{1*} +
      \mathbf{b}, \mathbf{f}_{2*} + \mathbf{b}, \cdots,
      \mathbf{f}_{n*} + \mathbf{b})       \label{eq:objective}
    \end{equation}
    is maximized.
  \end{itemize}
\end{quote}

Because the biases are learned with recent transactions only, it would
capture the recent trend change. The resultant biases can be different depending on the evaluation
criterion.  Note that all the three measures $ACC@k$, $AP@k$ and
$NDCG@k$ are not smooth with respect to prediction scores due to the
top-$k$ selection and dichotomy of relevance, making the problem
above intractable to solve
analytically.  
However, we shall show that the problem can be solved iteratively,
which is guaranteed to terminate in finite steps and converge to a
coordinate-wise (local) optimal.

\begin{table}
  \centering
  \caption{Nomenclature}
  \label{tab:symbols}
  \begin{tabular}{c|l}
    \hline
    Symbol(s) & Representation \\
    \hline 
    $n$ & total number of users \\
    $m$  & total number of items \\
    $k$  & number of items to recommend \\
    $u, v$ & index number of users \\
    $i, j$   & index number of items \\
    $p, q$ & index number of rank positions \\
    $f_{ui}$ & prediction score of item $i$ for $u$ \\
    $f_u^{(p)}$ & prediction score of the top $p$-th item for $u$ \\ 
    $\mathbf{f}_{u*}$ & prediction scores for  $u$ \\
    $\mathbf{f}_{*i}$ & prediction scores of item $i$ for all users \\
    $y_{ui}$ & relevance of item $i$ for $u$ \\
    $y_u^{(p)}$ & relevance of the top $p$-th item for $u$ \\
    $b_i$ & the bias of item $i$ \\
    \hline
  \end{tabular}
\end{table}

\section{Algorithm}
Since the bias learning aims to find a bias for each item, we can
rewrite the objective in Eq.~(\ref{eq:objective}) with respect to
items:
\begin{equation}
  \label{eq:objective_items}
  \max _{b_1, b_2, \cdots, b_m} perf(\mathbf{f}_{*1} + b_1, \mathbf{f}_{*2} + b_2,  \cdots,
  \mathbf{f}_{*m} + b_m). 
\end{equation}
The problem is difficult to resolve because of the ranking hidden in
calculating in $perf$. Yet, the problem is solvable if we optimize one
$b_i$ at a time.  We propose to adopt an \emph{alternating optimization}
approach.  That is, we fix the scores and biases for all other items
while optimizing the bias for one particular item.  We can cycle
through all items until the objective function is stabilized.  Next, we
first describe the case of finding the optimal bias for one single
item.  We'll use $ACC@k$ as an example to derive the algorithm and
then generalize it to handle other types of evaluation criteria like
$MAP@k$ and $NDCG@k$.

\subsection{Finding optimal bias for single item}
Keep in mind the accuracy contribution of one item solely depends on
whether the item enters into top-$k$. We start with the simplest case:
assuming for a given user $u$, item $i$ is not in top-$k$
recommendation, how will the objective in
Eq.~(\ref{eq:objective_items}) change accordingly if we increase the
score of $f_{ui}$ to push the item $i$ into top-$k$?  Once item $i$ is
pushed into top-$k$, naturally the top $k$-th item in the original
recommendation will be discarded.
This swap of items has four cases when considering the relevance of
each item, which is shown below:
\begin{equation}
  \label{eq:utility-change}
  \begin{array}[]{cc|c}
    \hline
    y_{ui} & y_u^{(k)} & \delta_{ui} =( y_{ui} - y_u^{(k)}
    )/ k \\
    \hline
    0 & 0 & 0 \\
    1 & 1 & 0 \\
    0 & 1 & -1/k \\
    1 & 0 & 1/k \\
    \hline
  \end{array}
\end{equation}
Apparently, when both items share the same relevance, that is, either
both relevant or irrelevant, the accuracy of user $u$ would not change
according to Eq.~(\ref{eq:ACC}).  The accuracy alters only if these
two items are associated with different relevance information.  In
particular, when the item discarded from top-$k$ is relevant while
item $i$ is not, the accuracy would decrease by $1/k$.  By
contrast, if item $i$ is relevant but the item discarded is not, the
accuracy increases by $1/k$.  We can summarize the accuracy change as
$( y_{ui} - y_u^{(k)} )/ k$ if we push item $i$ into the top-$k$
recommendation.

In order to push item $i$ into top-$k$ recommendation, the add-on bias
should be at least $f_u^{(k)} - f_{ui}$. Note that the bias value can be
negative.  Once the item enters top-$k$ recommendation,
the accuracy does not change even if we increase the bias more.  We
can record the necessary add-on bias of item $i$ to enter top-$k$
recommendation for each user, and compute its corresponding accuracy
change.  Then we can determine the overall accuracy change for all
users at different bias values and pick the optimal bias with maximal
utility increase.
For presentation convenience, we define \emph{utility as the accuracy
  change after item $i$ enters into top-$k$ recommendation}.  The
utility is $0$ if the item does not appear in the top-$k$
recommendation of any user, This serves as the reference point
(origin) in comparing utility values of different bias candidate
values.

\begin{algorithm}[t]
  \begin{small}
    \KwIn{item $i$, scores $\mathbf{f}_{*i}$, $\mathbf{f}_*^{(k)}$,
      and relevance $\mathbf{y}_{*i}$, $\mathbf{y}_*^{(k)}$\;}
    \KwOut{optimal bias $b_i$ and utility change $\Delta_i$\; }
    init candidate bias values $S = \phi$; $cur\_{util}=0$\;
    \For{each $u$}{ \If{$y_{ui} \ne y_{u}^{(k)}$} { compute utility
        change $\delta_{ui} = ( y_{ui} - y_u^{(k)} ) / k $\; compute
        score difference $s_{ui} = f_u^{(k)} - f_{ui}$\; append
        $(s_{ui}, \delta_{ui})$ to $S$\; \If{ $s_{ui} < 0$} {
          $cur\_{util} = cur\_{util} + \delta_{ui}$\; } } } push
    $(-inf, 0)$ to S\; find $b_i$ and $max\_util$ via subroutine in
    Algorithm~\ref{algo:subroutine}\; update $\Delta_i = max\_util -
    cur\_util$\; \Return{$b_i$, $\Delta_i$}
    \caption{Find optimal bias for item $i$}
    \label{algo:individual}
  \end{small}
\end{algorithm}
\begin{algorithm}[t]
  \begin{small}
    \KwIn{$cur\_util$, candidate pairs $S=\{ (s, \delta) \}$\;}
    \KwOut{optimal bias $b$ and $max\_util$\; } $S$ = sort $S$ by
    score difference $s$\; init $b=0$, $max\_util = cur\_util$\;
    init $util=0$\; \For{each $(s, \delta)$ in $S$}{ $util =
      util + \delta$\; \If{ $util > max\_util$}{ $b= s$\; $max\_util =
        util$\; } }
    \caption{Subroutine to find optimal bias given candidate <bias,
      utility change> pairs}
    \label{algo:subroutine}
  \end{small}
\end{algorithm}

Algorithm~\ref{algo:individual} presents the procedure to find the
optimal bias for one single item $i$ with respect to $ACC@k$.  Line
2-11 computes all possible bias values which may lead to a utility
change. Note that line 7-9 computes the current utility score when
bias is set to $0$. We'll prefer a $0$ bias if no utility improvement is
possible. 
Line 12 considers the extreme case that item $i$ does not
enter into top-$k$ recommendation for any user, i.e., we set the bias
to negative infinity (say, a huge negative constant), and the
corresponding utility is zero. This is included due to cases that item
$i$ entering into top-$k$ recommendation leads to a negative utility.
Then it is better to exclude item $i$ from top-$k$ recommendation.
Once we have all the potential bias values that may result in a
different utility score, the subroutine in
Algorithm~\ref{algo:subroutine} sorts the bias values in ascending
order, and figure out the optimal bias which leads to the maximal
utility.  Then in line 14-15 of
Algorithm~\ref{algo:individual}, we report the optimal bias and corresponding utility
change comparing with current utility.

The time complexity for such a procedure is $O(n\log n)$ where $n$ is
the number of users. Line 2-11 in Algorithm~\ref{algo:individual} cost linear time. Algorithm 2 costs
$O(n\log n)$ because of the sorting in line 1.  Therefore, the total
computational time to find the optimal bias for one single item is
$O(n \log n)$.

\subsection{Finding optimal biases for all items}
In the previous subsection, we have present an algorithm to learn the
optimal bias of one single item when fixing the scores for all other
items.  In order to find optimal biases for all
items, 
we propose to cycle through all items to update biases, tantamount to
the well-known coordinate descent method in optimization.
The detailed algorithm in shown in Algorithm~\ref{algo:coordinate}.

\begin{algorithm}[t]
  \begin{small}
    \KwIn{score predictions $\{\mathbf{f}_{1*}, \mathbf{f}_{2*},
      \cdots, \mathbf{f}_{n*}\}$, relevance
      information$\{\mathbf{y}_{1*}, \mathbf{y}_{2*}, \cdots,
      \mathbf{y}_{n*}\}$\;} \KwOut{ optimal biases
      $\mathbf{b}^*=\{b_1^*, b_2^*, \cdots, b_m^*\}$} construct
    candidate item set $C$ to learn optimal bias\; init $b_i^*=0$ for
    all $i \in C$\; init $\Delta_{utility} = 1$\;
    \While{$\Delta_{utility}>0$ } { reset $\Delta_{utility}=0$\;
      \For{each item $i$ in $C$} { compute potential utility change
        $\Delta_i$ and corresponding bias $b_i$ following
        algorithm~\ref{algo:individual}\; \If { $\Delta_i> 0$ } {
          update $\Delta_{utility} = \Delta_{utility} + \Delta_i$ \;
          update $b_i^* = b_i^* + b_i$\; update prediction scores
          $\mathbf{f}_{*i}$ for item $i$\; update top $k$-th
          recommendation for each user if necessary\; } } }
    \caption{Learning Biases for All Items}
  \end{small}
  \label{algo:coordinate}
\end{algorithm}

The algorithm consists of two loops: the inner loop cycles through all
items to find the optimal bias individually. The outer loop stops if
no utility increase can be found.  The related update in lines 8-12
after changing the bias of one item is straightforward except line
12. 
In order to save computational cost, we can record the top-k
recommendation given current scores, and the minimum score in the
top-k recommendation for each user. Based on the minimum score, we can
immediately decide whether the item $i$ is in the top-k
recommendation before and after bias is updated.  If the item $i$ is
not among top-$k$ recommendation before and after bias update, then
nothing needs to be done.  We can take care of other cases in a
similar vein so that we do not have to recompute top-$k$
recommendation for each user, and thus speed up the computation
significantly.

Based on the algorithm above, we can derive theoretical properties
below:
\begin{theorem}\label{thm:converge}
  Algorithm~\ref{algo:coordinate} is guaranteed to terminate in finte
  steps.
\end{theorem}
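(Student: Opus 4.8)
The plan is to turn the continuous optimization picture into a discrete one: even though $\mathbf{b}$ lives in $\mathbb{R}^m$, the objective $perf@k(\mathcal{U})$ from Eq.~(\ref{eq:perf}) depends on $\mathbf{b}$ only through a finite combinatorial object, so it takes only finitely many distinct values; since the algorithm only ever moves when it can \emph{strictly} increase this objective, it can move only finitely often, which forces termination. Concretely I would (i) bound the number of objective values, (ii) show every update is a strict improvement of a known, fixed size lower bound, (iii) conclude the number of updates, and hence of outer sweeps, is finite, and each inner sweep is plainly finite.

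First I would observe that, for a fixed deterministic tie-breaking rule, the value $perf@k(u)$ is completely determined by the ordered top-$k$ set $\mathcal{R}(\mathbf{f}_{u*}+\mathbf{b})$ together with the fixed relevance row $\mathbf{y}_{u*}$, and hence $perf@k(\mathcal{U})$ is determined by the tuple $(\mathcal{R}(\mathbf{f}_{1*}+\mathbf{b}),\dots,\mathcal{R}(\mathbf{f}_{n*}+\mathbf{b}))$. With $m$ items and $n$ users there are at most $m!/(m-k)!$ possible ordered top-$k$ sets per user, so at most $(m!/(m-k)!)^{n}$ such tuples. Therefore, as $\mathbf{b}$ ranges over $\mathbb{R}^m$, the objective takes values in a finite set $V\subseteq[0,1]$ (the bound $[0,1]$ being immediate from the definitions of $ACC@k$, $AP@k$, $NDCG@k$ and Eq.~(\ref{eq:perf})).

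Next I would argue that each execution of the update block (lines 8--12 of Algorithm~\ref{algo:coordinate}) strictly increases the objective: it is run only when $\Delta_i>0$, and by the derivation behind Algorithm~\ref{algo:individual}, $\Delta_i$ equals exactly the change in $perf$ caused by replacing $b_i^{*}$ by $b_i^{*}+b_i$ with all other biases held fixed. If $V$ is a singleton the while-loop makes one sweep and stops; otherwise put $\epsilon=\min\{\,v-v':v,v'\in V,\ v>v'\,\}>0$. Since the sequence of objective values produced by the algorithm is non-decreasing, lies in $V$, and rises by at least $\epsilon$ at every update, and since it is bounded above by $\max V\le 1$, the total number of updates across the whole run is at most $(\max V - v_0)/\epsilon \le 1/\epsilon$, where $v_0$ is the initial value at $\mathbf{b}=\mathbf{0}$; this is finite.

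Finally I would close the loop count: the outer while-loop exits precisely when a full sweep over the candidate set $C$ leaves $\Delta_{utility}=0$, i.e.\ makes no update, and every preceding sweep makes at least one update; hence the number of sweeps is at most one more than the number of updates, i.e.\ at most $1/\epsilon+1$. Each sweep makes $|C|\le m$ calls to Algorithm~\ref{algo:individual}, each of which terminates (a linear scan over users followed by a single sort). Hence the algorithm halts after finitely many steps. I expect the main obstacle to be the first step rather than the bookkeeping: one must notice that finite termination — as opposed to mere convergence of a bounded monotone sequence — rests on the objective being piecewise constant in $\mathbf{b}$ with only finitely many pieces, so that the combinatorial count of distinct top-$k$ configurations is doing the real work; a secondary subtlety worth stating explicitly is that a fixed tie-breaking convention is needed so that $perf$ is a well-defined function of $\mathbf{b}$ (otherwise the "finitely many values" claim is ambiguous at bias values where item $i$ ties the current $k$-th item).
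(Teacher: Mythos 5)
Your proof is correct, and its overall shape matches the paper's: the objective is monotonically non-decreasing under the updates, bounded above, and each productive sweep gains at least some fixed positive amount, so only finitely many sweeps occur. The difference lies in how the minimum increment is justified. The paper argues directly from the quantization of $ACC@k$: every accuracy change is an integer multiple of $1/(nk)$ per user (the paper states $1/k$ per cycle), so a productive cycle gains at least that much, and the bounded objective forces termination. You instead derive the positive gap $\epsilon$ from the combinatorial fact that $perf$ is a function of the tuple of ordered top-$k$ lists, of which there are only finitely many, so the objective ranges over a finite set $V$ and $\epsilon = \min\{v - v' : v > v'\} > 0$. Your route is more general: it covers $MAP@k$ and $NDCG@k$ (where the increments are not obviously multiples of a single fixed quantity) with no extra work, and it makes explicit the tie-breaking caveat needed for $perf$ to be a well-defined function of $\mathbf{b}$ --- a point the paper glosses over. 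What the paper's argument buys in exchange is an explicit, computable bound on the number of cycles (at most $k$ times the range of the objective) rather than the non-constructive $1/\epsilon$.
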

\begin{proof}
  The $\Delta_{utility}$ is upper-bounded by the maximum accuracy of
  top-k recommendation among all the possible rankings.  Each
  cycle (lines 4 - 15) in the algorithm will increase utility by
  at least $1/k$ if not zero.  Therefore, the algorithm must terminate
  in finite iterations.
\end{proof}

The number of cycles tends to be very small in reality. In most
cases, we just need to cycle through all items a couple of times.
Nevertheless, we show that based on the algorithm, we are able to
filter out certain items so that each iteration scans smaller number of
items, which is stated below:
\begin{theorem}\label{thm:label}
  If one item $i$ satisfies $y_{ui} = 0$ for all $u$ , then the item
  $i$ can be removed from consideration of recommendation.
\end{theorem}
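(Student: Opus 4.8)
The plan is to show that the bias-learning problem in Eq.~(\ref{eq:objective_items}) admits an optimal solution in which item $i$ never enters any user's top-$k$ list, so that dropping item $i$ from the candidate set $C$ cannot lower the attainable objective. Concretely, I would start from an arbitrary bias vector $\mathbf{b}$ over the items of $C$ and compare it with the vector $\mathbf{b}'$ that agrees with $\mathbf{b}$ everywhere except that $b_i$ is replaced by $-\infty$ (a sufficiently large negative constant), so that item $i$ is excluded from every $\mathcal{R}(\mathbf{f}_{u*}+\mathbf{b}')$. The target inequality is $perf(\mathbf{b}')\ge perf(\mathbf{b})$, writing $perf(\mathbf{b})$ as shorthand for the objective of Eq.~(\ref{eq:objective_items}); applying it to an optimizer of Eq.~(\ref{eq:objective_items}) and noting that $\mathbf{b}'$ is precisely a feasible solution of the reduced problem over $C\setminus\{i\}$ then shows the two problems share the same optimum, which is exactly the claim.

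The first step is a per-user case analysis of $\mathcal{R}(\mathbf{f}_{u*}+\mathbf{b})$ versus $\mathcal{R}(\mathbf{f}_{u*}+\mathbf{b}')$. If item $i$ is not in user $u$'s top-$k$ under $\mathbf{b}$, then lowering only its score changes nothing and $perf@k(u)$ is unaffected. If item $i$ sits at rank $p_0$ in $u$'s top-$k$ under $\mathbf{b}$, then under $\mathbf{b}'$ item $i$ leaves, every item originally at ranks $p_0+1,\dots,k$ moves up by one position, and the item originally at rank $k+1$ fills position $k$ (the boundary case $m=k$ is analogous, with no new item entering).

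The second step verifies that in this latter case no evaluation criterion decreases, which is where the hypothesis $y_{ui}=0$ enters. For $ACC@k$, item $i$ contributed $y_{ui}/k=0$ and the entering item contributes $y_u^{(k+1)}/k\ge 0$ while the other contributions are unchanged, so accuracy does not drop; this is just the fact that $\delta_{ui}=(y_{ui}-y_u^{(k)})/k\le 0$ in the table of Eq.~(\ref{eq:utility-change}). For $NDCG@k$, each relevant item that shifts from position $p$ to $p-1$ has its gain scaled by $\log(1+p)/\log(p)\ge 1$, deleting the zero-gain item at $p_0$ costs nothing, and the item entering at position $k$ adds a non-negative gain; since the normalizer $Z_{ku}$ depends only on the (fixed) number of relevant items, $NDCG@k(u)$ does not decrease. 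For $AP@k$, items above $p_0$ keep the same prefix and hence the same precision, while each item that moves from position $p$ to $p-1$ has its precision go from $(1/p)\sum_{q\le p} y_u^{(q)}$ to $(1/(p-1))\sum_{q\le p} y_u^{(q)}$ (the deleted item at $p_0<p$ contributes $y_u^{(p_0)}=0$ to the prefix, so only the normalization $p\mapsto p-1$ changes), which is no smaller; the item entering at position $k$ contributes a non-negative term; and the denominator $\min\{k,\#\text{relevant for }u\}$ is untouched, so $AP@k(u)$ does not decrease either. Averaging over users via Eq.~(\ref{eq:perf}) yields $perf(\mathbf{b}')\ge perf(\mathbf{b})$.

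I expect the bookkeeping in the $MAP@k$ and $NDCG@k$ cases to be the main obstacle: one must make ``the item at rank $k+1$'' and the uniform one-step position shift well defined in the presence of ties, and carefully verify that precision and discounted gain are monotone under deleting an irrelevant entry from a prefix. The $ACC@k$ case, by contrast, is immediate from Eq.~(\ref{eq:utility-change}). Once $perf(\mathbf{b}')\ge perf(\mathbf{b})$ holds for an optimal $\mathbf{b}$, the fact that $\mathbf{b}'$ produces top-$k$ lists that never contain item $i$ finishes the proof: item $i$ can be dropped from $C$ with no loss in objective.
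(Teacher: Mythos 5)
Your proposal is correct and follows essentially the same route as the paper: replace $b_i$ by a sufficiently large negative constant so that item $i$ leaves every top-$k$ list, and argue via the swap logic of Eq.~(\ref{eq:utility-change}) that no user's performance can decrease when an item with $y_{ui}=0$ is ejected. The only difference is one of completeness rather than approach --- the paper's proof argues explicitly only through the $ACC@k$ utility table and asserts the rest, whereas you carry out the monotonicity check for $AP@k$ and $NDCG@k$ (the one-position upward shift and the fixed normalizers) in detail, which is a welcome strengthening of the published argument.
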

\begin{proof}
  Let's revisit the utility change table as shown in
  Eq.~(\ref{eq:utility-change}).  Note that when $y_u^{(k)}$ is $0$,
  swapping in a different item in the top-k recommendation will always
  increase or keep current utility.
  Because item $i$ satisfies $y_{ui}=0$ for all $u$, it follows that
  setting a bias of negative infinity for item $i$ would lead to no
  utility loss overall (if not positive utility change), no matter
  whether item $i$ is in the top-$k$ recommendation.  It is
  essentially equivalent to removing item $i$ from consideration of
  recommendation.
\end{proof}

The theorem above suggests that we may remove items without positive
relevance from the input score predictions in
Algorithm~\ref{algo:coordinate}.  In the context of eCommerce, those
products which does not appear in recent transactions can be removed
from recommendation directly and thus reduce computational cost
substantially.

The proposed algorithm is sequential in nature, and thus difficult to
parallel.  The time complexity for the proposed algorithm is at least
$O(mn\log n)$ where $m$ is the size of candidate item set $C$, and $n$
is the number of users.  Even though with few cycles, the time and
space complexity will be scary when both numbers of users and items
are huge.  Next we discuss a couple of heuristics we might consider in
practice.  All these methods have been adopted in our experiments and
have been shown to save tremendous time and space.

\subsection{Scaling up for Practical Implementation}
\label{sec:practical-implementations}

First of all, it is not necessary to store the prediction scores for
all items.  Considering recommendation for even just 100K users with
100K items, which is medium size in the era of big data. Assuming each
score takes 4 bytes as a float number, it requires $100K \times 100K
\times 4 = 40G$ memory space.
Therefore, we suggest exploiting a sparse representation of prediction
scores by keeping only select number of top recommendations while
defaulting others to zero. The number of recommendations to keep
is typically a multiplier of $k$.  For instance, when we need to
optimize for performance@top-10, we can keep the top-$50$
recommendations from a model. For the remaining items, their predict
scores are set to $0$.  This is valid because we often observe a fast decay
of the scores no matter what the base recommendation model is.


For another, we can shrink the candidate item set for bias tuning to
reduce time complexity.
Two types of items should be considered with higher priority for bias
tuning.  One set are those items which are recommended frequently based
on raw prediction scores, which tend to be past popular items.  The
other set are those appearing frequently in recent transactions, which
are the recently popular ones.  The former is likely to lead to a
negative bias, while the latter a positive bias.  For other
items outside the candidate set, we set their bias value to
zero without changing their prediction scores.

Moreover, we notice that the number of items with updated bias scores
is dramatically decreased with iterations of cycling through items.
The algorithm reaches a cooridate-wise local optimal and stops after
few cycles.  Yet, the scanning of all items are expensive, and
thus early stopping criteria can be used. For example, we may set a
percentage threshold such that if fewer than the percentage of items
need to update bias, then we terminate the cycling.  Or we can
simply set the upper-bound of cycles.  In our experiments, we just
set it to $2$ to reduce computational time, yet found no performance
loss.

\subsection{Extensions to optimize MAP or NDCG}
\label{sec:map-ndcg}
We have described how we optimize biases with respect to $ACC@k$. Now
we extend the algorithm~\ref{algo:individual} to optimize $MAP@k$ or
$NDCG@k$.
Different from accuracy, for which the position of one item in the
top-$k$ recommendation does not matter, MAP and NDCG is
\emph{rank}-related metric.  The position of one item in the top-$k$
recommendation plays an import role. A relevant item ranked as top-1
will results in a different score than that when the item is ranked as
the top $k$-th.  In order to compute the potential utility change with
different bias scores, we have to consider all the possible positions
rather than just the top $k$-th item.  Nevertheless, the basic idea
remains the same.  
We
compute out the score difference and corresponding utility change
with respect to each position in top-$k$.

Take average precision in Eq.~(\ref{eq:AP}) as an example.  Given a
user, assume the average precision of top-$k$ recommendation without
item $i$ is $AP_0$, serving as our reference point. We can gradually
increase the score of item $i$ to be ranked at position $k$, $k-1$,
$\cdots$, $1$, and compute out the corresponding performance.  For
each position, its utility change should be computed as follows:
\begin{center}
  \begin{tabular}{c|c|c}
    \hline
    Position & Performance & utility change \\
    \hline
    $k$ & $AP_k$ & $\delta_k = AP_k - AP_{0}$ \\
    $k-1$ & $AP_{k-1}$ & $\delta_{k-1} = AP_{k-1} - AP_{k}$\\
    $\vdots$ & $\vdots$ & $\vdots$ \\
    $1$ & $AP_{1}$ & $\delta_{1} = AP_{1} - AP_{2}$ \\
    \hline
  \end{tabular}
\end{center}
Suppose item $i$ is currently at position $p+1$, and we increase its
bias to position $p$.  Note that only items at position $p$ and $p+1$
are swapped, while the other items remain unchanged. Therefore, we only
need to examine the utility change because of the swap of the two
items.  If item $i$ and the item at position $p$ have the same
relevance, i.e.  $y_{ui} = y_{u}^{(p)}$, the AP would not change, and
hence leading to no utility change.  If $y_{ui} = 1$ and $y_{u}^{(p)}
= 0$, then it leads to a utility increase when we push item $i$ into
position $p$; If $y_{ui} =0$ and $y_{u}^{p} = 1$, then a reduction of
utility.  Let
$\tilde{k} = \min \{k, \# \text{relevant items for}~u\}$, the utility
change can be derived below:
\begin{eqnarray}
  \delta_{ui}^{(p)} = ~~~~~y_{ui}\cdot \left ( \frac{1+ \sum_{q=1}^{p-1}y_u^{(q)}}{p} - 
    \frac{1+ \sum_{q=1}^{p-1}y_u^{(q)}}{p+1}\right ) / \tilde{k} \label{eq:map-term1}\\
  +~y_{u}^{(p)}\cdot \left ( \frac{1+ \sum_{q=1}^{p-1}y_u^{(q)}}{p+1} - 
    \frac{1+ \sum_{q=1}^{p-1}y_u^{(q)}}{p}\right ) / \tilde{k} \label{eq:map-term2}\\
  =~~~\left ( y_{ui} - y_{u}^{(p)} \right )\left ( 1+ \sum_{q=1}^{p-1}y_u^{(q)}
  \right )\left (\frac{1}{p} - \frac{1}{p+1} \right ) /
  \tilde{k} \label{eq:map-change} \\
  with~~~~\frac{1}{p+1} = 0 ~~~if~p+1 > k. ~~~~~~~~~~~~~~~~~~~~~~~~~ \nonumber
\end{eqnarray}
In the equation, the first term in Eq.~(\ref{eq:map-term1}) is the
utility change that item $i$ moves from position $p+1$ to position
$p$, and the second term in Eq.~(\ref{eq:map-term2}) is the utility
change when the original item at position $p$ downgrades to position
$p+1$.  There is one special case when $p+1 > k$, i.e., item $i$ has
not entered into top-$k$ yet at the beginning.  In that case, we
replace $1/(p+1)$ by $0$.   Similarly for NDCG, it follows that
\begin{eqnarray}
  \label{eq:ndcg-change}
  \delta_{ui}^{(p)}=\left ( y_{ui} - y_{u}^{(p)} \right )\left (\frac{1}{\log
      (1+p)} - \frac{1}{\log (1+ (p+1))} \right ) /
  Z_{uk}  \\
  with~~~\frac{1}{\log (1+ (p+1))} = 0 ~~~if~p+1 >k. ~~~~~~~~~~~~~~~~ \nonumber 
\end{eqnarray}

Note that when $y_{ui}$ and $y_u^{(p)}$ are the same, both utility
changes following Eqs~(\ref{eq:map-change}) and (\ref{eq:ndcg-change})
would be zero.  Hence, rather than checking every possible position in
top-$k$, we just need to check those positions with different
relevance to item $i$.  The algorithm to find optimal bias for one
single item is summarized in Algorithm~\ref{algo:NDCG}.  It is almost
the same as Algorithm~\ref{algo:individual}, except line 3-5.  Line
3-4 are supposed to check each potential position in the top-k, and
line 5 is to update the utility change based on corresponding
performance metric. Apparently, such a change leads to an increase of
time complexity, reaching $O(kn\log kn)$ to find optimal bias for
single item. This is still fine if $k$ is
reasonably small, which is mostly true in practice. Plugging 
Algorithm~\ref{algo:NDCG} into Algorithm~\ref{algo:coordinate},  we obtain the
bias values for all items with respect to select evaluation criterion.

\begin{algorithm}[t]
  \begin{small}
    \KwIn{item $i$, scores $\mathbf{f}_{*i}$, $\mathbf{f}_*^{(k)}$,
      and relevance $\mathbf{y}_{*i}$, $\mathbf{y}_*^{(k)}$\;
      rank-related performance metric} \KwOut{optimal bias $b_i$ and
      utility change $\Delta_i$\; } initialize candidate set $S =
    \phi$, $cur\_{util}=0$\; \For{each $u$}{ \For{each position $p$}{
        \If{$y_{ui} \ne y_{u}^{(k)}$} { compute utility change
          wrt. performance metric following Eq.~(\ref{eq:map-change})
          or (\ref{eq:ndcg-change})\; compute score difference $s_{ui}
          = f_u^{(k)} - f_{ui}$\; append $(s_{ui}, \delta_{ui})$ to
          $S$\; \If{ $s_{ui} < 0$} { $cur\_{util} = cur\_{util} +
            \delta_{ui}$\; } } } } push $(-inf, 0)$ to S\; find $b_i$
    and $max\_util$ via subroutine in
    Algorithm~\ref{algo:subroutine}\; update $\Delta_i = max\_util -
    cur\_util$\; \Return{$b_i$, $\Delta_i$}
    \caption{Find optimal bias wrt. MAP/NDCG}
    \label{algo:NDCG}
  \end{small}
\end{algorithm}

\section{Experiment Setup}
In this section, we mainly describe the basic setup for our
experiments, including preparation of benchmark data sets, base
recommendation model and other methods considering temporal dynamics
for comparison.

\subsection{Benchmark Data Sets}
We collect customer transactions of XYZ 
 and construct
benchmark data sets via a split based on date. User activities before
the date are used for training, and the transactions in the subsequent
week are used to evaluate recommendation performance.  Corresponding
performance measures include ACC@k, MAP@k, and NDCG@k as described in
Section~\ref{sec:evaluation-criteria}.  In our experiments, $k$ is set
to $10$. 
 For easy interpretation, we report all numbers in
terms of lift (relative improvement) with respect to one baseline:
\begin{displaymath}
  lift = \left ( \frac{perf}  { perf_{baseline} }- 1 \right ) * 100\%.
\end{displaymath}

We prepare two benchmark data sets: one is during regular season
(Nov. 1, 2013); and the other is during holiday season (Dec. 11,
2013).
User shopping behavior in holiday season tends to be quite different
from regular season, both in terms of quantity and trending products.
We aim to verify the efficacy of our proposed method under both
settings.

\subsection{Base Recommendation Model}
One commonly used approach for recommendation in eCommerce is to model
user actions as a Markov chain
\cite{Shan-etal05-an,Rend-etal10-factorizing}. It is tantamount to
computing item similarity~\cite{linden2003amazon}, but keeping the metric directional.
That is, one's current action depends only on his most recent
action. 
The transition probability from one action to
another can be estimated below: 
\begin{equation}
  P(\text{buy}~i|\text{bought}~j) =  \frac{ \text{\# users who bought
      $j$ then $i$} } { \text{\# users who bought $j$} }. \label{eq:1st-order}
\end{equation}
We also take into consideration of those highly associated purchase
patterns by computing co-purchase probability of multiple items beyond
$2$.  For instance, the transition of two actions leading to one
purchase can be computed as:
\begin{equation}
  P(\text{buy}~i|\text{bought}~j_1, j_2) =  \frac{ \text{\# users
      who bought $j_1, j_2$ then $i$} } { \text{\# users who bought
      $j_1$, $j_2$} }. \label{eq:2nd-order}
\end{equation}
However, considering those higher-order purchase patterns leading to
explosion of state spaces. Therefore, we consider only the state
spaces containing up to $2$ actions.
As for prediction, we pick the products with highest probability given
user's most recent few transactions.

This Markov model is exploited because it has been validated to work
quite well in eCommerce. Later in experiments, we shall show that our
proposed bias learning can be applied to other base recommendation
models as well.




\subsection{Methods Considering Temporal Change}
For our proposed bias learning method (denoted as $M_{bias}$), we use
most recent 3-day transactions to fine tune the bias of items.  
All the biases, unless specified, are learned via optimizing ACC@k.
As for comparison, we also include one baseline method without
considering temporal change.
\begin{description}
\item $M_{long}$: This method utilizes as long history as possible for
  training. As already shown in Figure~\ref{fig:markov_ri}, the longer
  time window we use, the better the recommendation model performs. In
  our experiments, we use up to 15 months
of user activity history for
  training.
\end{description}
Besides the baseline, there are several other approaches to take into
account temporal dynamics.
\begin{description}
\item $M_{truncate}$: According to Theorem~\ref{thm:label}, if one item
  does not appear in recent transactions, we can remove it from
  recommendation.  This method trains the recommendation model using
  15-month data, but for prediction it concentrates only on those
  items that are recently attracting user attentions.  It truncates
  the item set for recommendation but does not tune biases of items.
\item $M_{distrdiff}$: This method directly computes a bias term for
  each item, rather than optimizing biases with respect to certain
  metric.  In particular, we compute two distributions of items, one
  from the 15-month transactions(denoted as $\mathbf{d}^{(long)}$), and the
  other from recent transactions (denoted as $\mathbf{d}^{(short)}$). So the
  item-specific bias is $\mathbf{b} = \mathbf{d}^{(short)} - \mathbf{d}^{(long)}$.
  Biases are added to the normalized probabilistic output of recommendation engine
  to promote recent trending products while suppressing past popular
  ones.
\item $M_{decay}$: Another option is to train a model that already
  incorporates temporal dynamics, by assigning lower weight to those
  remote events.  An exponential decay function is used:
   $ w = \exp { ( - \Delta_t / \beta ) }$
  where $\Delta_t$ is the time gap of the purchase in
  Eqs~(\ref{eq:1st-order}) and (\ref{eq:2nd-order}) to current date
  of recommendation, and $\beta$ is a decay factor.  $\beta$ is
  set to 60 in our experiments.
\end{description}

Note that there has been some work to consider temporal dynamics for recommendation. For example, Koren et al.~\cite{Kore09-collaborative,Koen-etal11-Yahoo}
proposed to have a time-dependent bias in matrix factorization for
movie/music ratings. The model minizes the root mean squared error
and adopts stochastic gradient descent to
find biases and latent factors. However, in our application, the
responses are binary (either purchase or not purchase), and only
positive responses are collected.  A trivial solution would set bias
to $1$ for all, which is meaningless.  We may randomly
sample negative entries for our one-class collaborative
filtering~\cite{Pan-etal08-one} problem, but that would essentially
connect bias to the sampling rate, which is not acceptable either. 

\section{Experiments}
In this section, we conduct a series of experiments over the
constructed benchmark datasets to study the performance, sensitivity
to base recommendation models and performance metrics.  At the end, we
report results by applying our method to online A/B test.

\subsection{Performance Comparison}
The performance of various methods are shown in
Tables~\ref{tab:mkv-regular-lift} and \ref{tab:mkv-holiday-lift}.  For
easy comparison, we deem $M_{long}$ as the baseline and show the lift
of other methods.  The numbers in bold face indicates the one with
best performance. For both data sets at regular season and holiday
season, our proposed method $M_{bias}$ is the winner.  The lift is
more observable at holiday season because of the strong shopping
pattern change thanks to Black Friday, Cyber Monday and other
promotion campaigns.  The numbers might look small, but keep in mind
it took nearly 3 years and thousands of teams worldwide to improve
10\% over a trivial baseline in Netflix prize competition\footnote{\url{http://en.wikipedia.org/wiki/Netflix_prize}}. 
 It is noticed that $M_{truncate}$ and
$M_{distrdiff}$ both yield some improvement, suggesting that it is
always helpful to incorporate recent trend. However, neither of them
is comparable to learning a bias for each item as we proposed.
As shown in both tables, adding a weighted decay based on recency for
training does not help.  In short, learning biases to
capture the temporal dynamics can model individual interests and
preferences more accurately, and thus improve performance of the base
recommendation model.  This is especially helpful when the temporal
fluctuation is huge, as shown during the holiday season.

\begin{table}[t]
  \vskip -0.1in
  \centering
  \caption{Lift (\%) at Regular-Season Data}
  \label{tab:mkv-regular-lift}
  \begin{tabular}{l|rrr}
    \hline
    Method   & ACC@10 & MAP@10 & NDCG@10 \\
    \hline
    $M_{long}$ & 0 & 0 & 0\\
    $M_{bias}$ & {\bf 1.228} & {\bf 0.842} & {\bf 0.972} \\
    $M_{truncate}$ & 0.460 & 0.142 & 0.239 \\
    $M_{distrdiff}$ & 0.350 & 0.384 & 0.389 \\
    $M_{weighted}$ & -10.703 & -11.083 & -10.495 \\ 
    \hline
  \end{tabular}
  \caption{Lift (\%) at Holiday-Season Data}
  \label{tab:mkv-holiday-lift}
  \begin{tabular}{l|rrr}
    \hline
    Method   & ACC@10 & MAP@10 & NDCG@10 \\
    \hline
    $M_{long}$ & 0 & 0 & 0\\
    $M_{bias}$ & {\bf 5.857} & {\bf 5.391} & {\bf 5.482} \\
    $M_{truncate}$ & 1.058 & 0.806 & 0.860 \\
    $M_{distrdiff}$ & 0.921 & 0.593 & 0.705 \\
    $M_{weighted}$ & -4.737 & -4.302 & -4.174 \\ 
    \hline
  \end{tabular}
  \vskip -0.1in
\end{table}

\begin{table}
  \centering
  \vskip -0.1in
  \caption{Coverage of Top Popular Items}
  \label{tab:top-popular}
  \begin{tabular}{|c|cc|}
    \hline
    top popular items    & without bias & with bias  \\
    \hline
    10 & 3 & 7 \\
    20 & 8 & 15 \\
    50 & 24 & 28 \\
    100 & 48 & 57 \\
    200 & 97 & 116 \\
    500 & 260 & 300 \\
    1000 & 531 & 576 \\
    \hline
  \end{tabular}
\vskip -0.2in
\end{table}

\begin{figure*}[t]
  \begin{minipage}[b]{0.34\textwidth}
    \includegraphics[width=\textwidth]{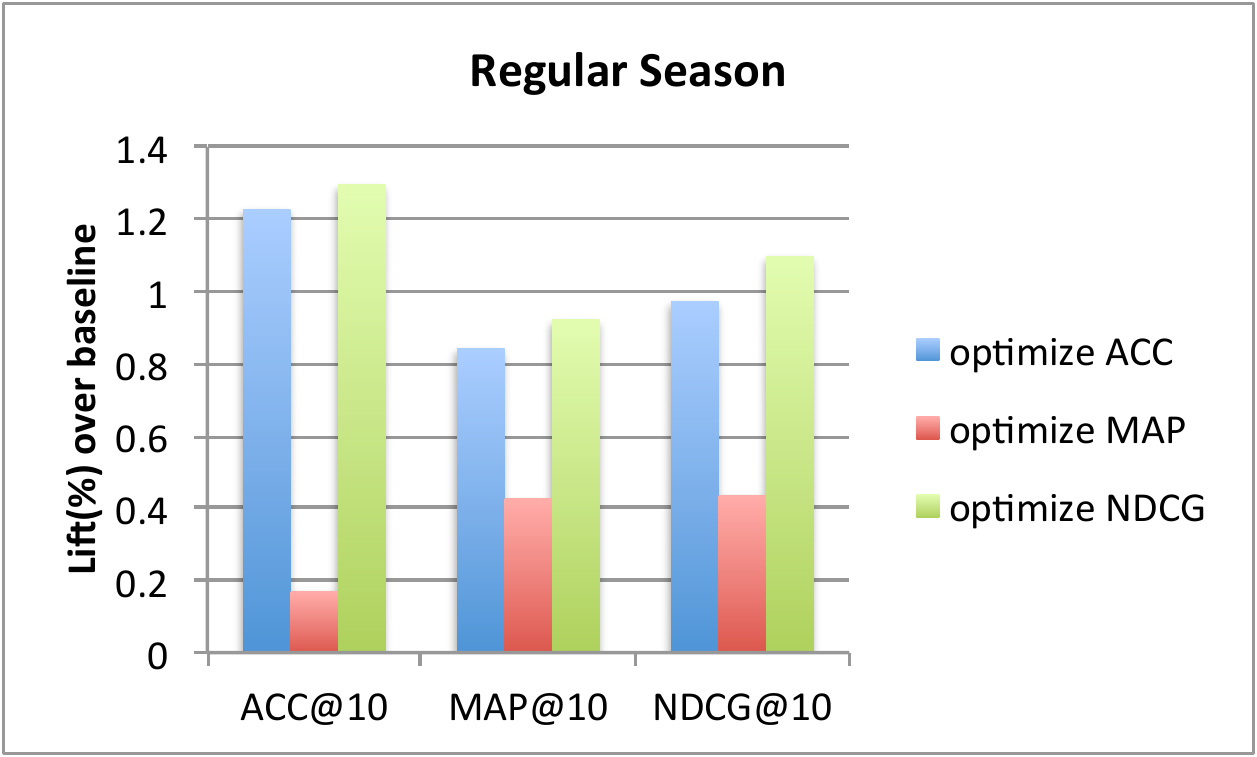}
    \caption{Lift at Regular Season}
    \label{fig:regular-map}
    \end{minipage}
    \begin{minipage}[b]{0.34\textwidth}
    \includegraphics[width=\textwidth]{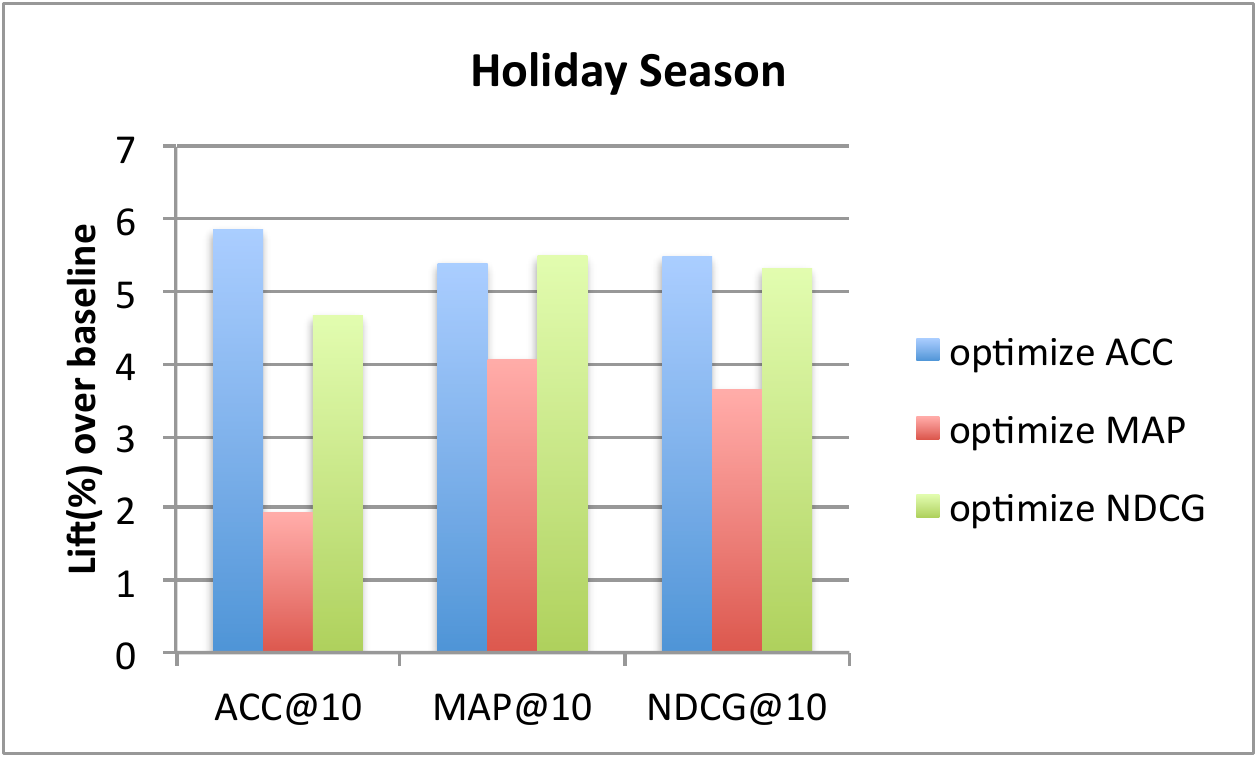}
    \caption{Lift at Holiday Season}
    \label{fig:holiday-map}
    \end{minipage}
    \begin{minipage}[b]{0.32\textwidth}
    \includegraphics[width=\textwidth]{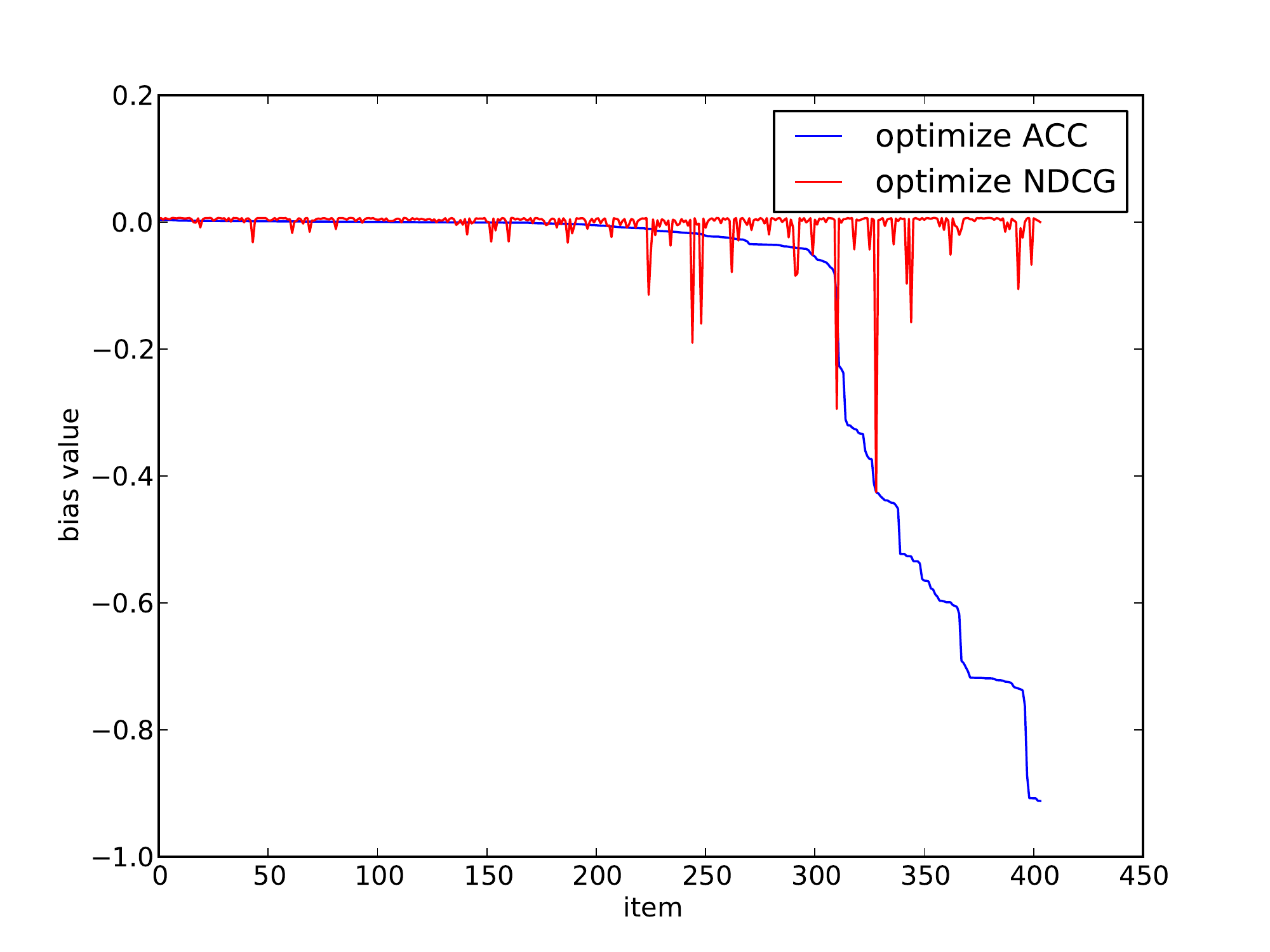}
    \vskip -0.1in
    \caption{Bias Values}
    \label{fig:bias-values}
    \end{minipage}
    \vskip -0.15in
  \end{figure*}

At macro level, we observe that bias learning tends to shift the item
distribution in our top-k recommendation towards the ground truth.  We sort all
items based on their frequency in reverse order in test data and
predictions respectively, and then check the overlap of top popular
items between the two. The result over the holiday season data is
shown in Table~\ref{tab:top-popular}.  A similar trend is observed
during regular season.  Apparently, adding the bias helps the
prediction to capture those recently trending products.  For example,
only 3 out of the top-10 popular products in testing period is covered
by the raw recommendation.  However, adding a bias immediately
increases the coverage to 7.  This pattern is consistent for a wide
range of values as shown in the Table.

We also compute the KL-divergence between transactions in testing
period and predictions.  As some items appear only in transactions or
predictions, we smooth the distributions by considering all zero
frequency as $0.01$ to avoid an unbounded divergence value.  For the
data in regular season, the KL-divergence between predictions and true
transactions is $0.9413$, and decreases to $0.9406$ once biases are
added. Similarly for data in holiday season, the divergences reduces
from $0.7263$ to $0.6757$ owning to learned biases.

In summary, our proposed method is able to learn biases that
incorporate temporal dynamics of items, both at individual-level and
macro-level, and hence improving recommendation performance.  This
improvement is more observable when there is a big difference between
data distribution for training and recommendation time.

\subsection{Optimizing Different Performance Metrics}

Here we apply the 
proposed bias learning algorithm with respect to rank-related metrics
like MAP or NDCG. 
Figures~\ref{fig:regular-map} and \ref{fig:holiday-map} plot the lift of $M_{bias}$
over baseline $M_{long}$ 
at both data
sets.  First of all, all methods lead to a positive lift, implying the
effectiveness of learning biases.  Nevertheless, the optimization
criteria results in final performance difference.  Initially, we conjecture that
optimizing one metric would lead to higher numbers in terms of that
particular metric, e.g., optimizing MAP should result in higher MAP in
test performance.  In reality, this is not the case.  As seen in
Figure~\ref{fig:holiday-map}, optimizing MAP actually results in lower
performance in terms of all three metrics, and optimizing NDCG, on the
contrary, yields comparable performance to ACC. 

We also compare the difference of learned biases of optimizing ACC or NDCG.
It turns out we learned 386 non-zero biases for ACC, 402 non-zero
biases for NDCG and 384 are shared between both.  The corresponding
bias values are shown in Figure~\ref{fig:bias-values}, with items
sorted based on bias values of optimizing ACC.  It is noticed that
majority of them have close-to-zero values. Though some values are
positive, more bias values are towards negative, suggesting many item
biases tries to discount the popularity in the training data.  The
scale of negative values tend to be much larger.  Moreover, optimizing
NDCG  gives less extreme values, because it considers all the
possible ranking positions of top-$k$.

Recall that the time complexity of each iteration in optimizing NDCG
is $k$ times larger than that of optimizing ACC.  Therefore, we have
to strike a balance between the performance and computational time. 
In practice, we can have one trained
recommendation model, and just need to update the biases daily. Faster
convergence can be accomplished via warm start, i.e., adopting the
learned biases yesterday for initialization.  For our experiment
purpose, we just report findings of optimizing ACC thereafter.


\subsection{Bias Learning for Different Base Models}
In previous subsections, we have mainly studied properties of bias
learning with the Markov model.  Here we explore
other base recommendation models.  Our proposed bias learning is kind
of orthogonal to the base model being used and it is applicable
to a wide array of models.  Two models are considered here: 
matrix factorization (MF) and category-based
recommendation (CBR).

\emph{Matrix factorization} (MF) gained momentum thanks to the Netflix
prize competition\cite{Kore-etal09-matrix,Taka-etal09-scalable}. It is
shown to be one of the start-of-the-art methods for collaborative
filtering. Standard matrix factorization aims to approximate a
user-item matrix as the product of two low-rank matrices:
$A_{n\times m} \approx P_{n\times \ell } Q_{\ell \times m}.$
where $P$ and $Q$ are the latent factors
of users and items, respectively.
Typical matrix factorization (either through alternating least squares
or stochastic gradient descent) requires an iterative process, which
involves too much overhead when implemented in MapReduce in order to
deal with large-scale data sets.  Alternatively, we implemented a
randomized version of matrix factorization as described in
\cite{Tang-Harr13-scaling}.   It utilizes a randomized
SVD~\cite{Halk-etal11-finding} to compute approximate $Q$ and then
determines $P$ given $Q$.

\emph{Category-based recommendation} (CBR) assumes a user is more likely to
purchase a product within the same category if he has already
indicated an interest in a category. 
\begin{displaymath}
 P(\text{buy i} | \text{user}~u )  = \sum \limits_{c} P(\text{buy i} | \text{buy in}~c) 
  \cdot P(\text{buy in}~c|\text{user}~u). 
\end{displaymath}
The interest categories of one user $P(\text{buy in}~c|\text{user}~u)$
is computed through his past actions.  Each user is represented as a
multinomial distribution of interest categories, by mapping each of
his actions to its corresponding category in a carefully curated
product taxonomy.  To estimate $P(\text{buy i} | \text{buy in}~c)$, we
examine the popularity of each product among existing transactions. In
order to capture the recent trend, we apply $M_{truncate}$, that is,
we restrict ourselves to look at transactions only within the past
few days/weeks at recommendation time.  Therefore, this category-based
recommendation tends to pick those recent best-selling products given
one's personal interest categories.


We apply bias learning to both models.  For brevity, we just report
the lift in terms of ACC@10 over different base models in
Table~\ref{tab:different-base}.  For all base models, we observe a
positive lift, suggesting that our bias learning is able to capture
the temporal dynamics no matter which model is being used.  The lift
over MF is substantial, partly because of MF's poor performance
itself.  To our surprise, among all three methods,
matrix factorization performs the worst.  Such a poor performance of
matrix factorization is also observed in other domains with binary
responses~\cite{Aiol13-efficient}.  One factor is that matrix
construction based on transactions is critical yet not well
defined. It is difficult to incorporate both frequency and recency
information simultaneously into one matrix.

\begin{table}[t]
  \centering
  \caption{Lift(\%) of Bias Learning over Base Models}
  \label{tab:different-base}
  \begin{tabular}{c|ccc}
    \hline
    base model & MF & CBR & Markov\\
    \hline
    regular-season & 39.100 & 0.425 &  1.228\\
    holiday-season & 27.938 & 5.477 & 5.857\\
    \hline
  \end{tabular}
  \vskip -0.2in
\end{table}

\subsection{Online A/B Tests}
Here we run online A/B test through email campaigns to further examine
the impact of added bias for recommendation.  Each email contains 8
item recommendations.  As mentioned earlier, the Markov model works
quite well in our domain and is adopted for base recommendation.
We sample a small percentage of XYZ customers and randomly split
them into two buckets, one with bias learning and the other without.
We run two tests, on 2013/11/26 and 2013/12/07, respectively. 
Both tests sent around 800K
marketing emails. 
Three widely-used
metrics are recorded: the click-through rate
(CTR), 
average number of orders and revenue per email-open. We attribute one
order/revenue to be from marketing emails only if customers receiving
emails click on one link in the email and place order(s) within the same
session.
\begin{table}[b]
  \centering
\vskip -0.3in
  \caption{Lift(\%) of online recommendation with bias}
  \label{tab:online}
  \begin{tabular}{c|ccc}
    \hline
    Date   &  CTR & avg \#order & avg revenue \\
    \hline 
    2013/11/26  & 3.35 & 6.10 & 5.64 \\
    2013/12/07  & 9.30 & 108.05 & 102.76 \\
    \hline
  \end{tabular}
\end{table}
The lifts after bias learning are shown in
Table~\ref{tab:online}.  For all metrics, we see a positive lift,
though only the CTR is shown to be significant. Because the number of orders
was extremely small, it is difficult to reject the null hypothesis given limited impressions.
These online
tests confirm our hypothesis that adding a bias to capture temporal
dynamics intrigue more customers to click and place orders
subsequently, suggesting more effective recommendation.

\section{Related Work}
Mining concept-drifting data streams~\cite{Wang:2003:MCD:956750.956778} for classification and pattern
mining has been studied extensively. 
Yet considering temporal changes for recommendation is gaining some
attention recently.  Koren et al.~\cite{Kore09-collaborative,Koen-etal11-Yahoo}
proposed to have a time-dependent bias in matrix factorization for
movie/music ratings. But the proposed method is not applicable for one-class collaborative filtering~\cite{Pan-etal08-one} problem. Moreover, it aims to minimize the root mean squared error (which is differentiable) rather than ranking metrics for top-$k$ recommendation.
On
the other hand, Xiong et al.~\cite{xiong2010temporal} formulate
temporal collaborative filtering as a tensor factorization by
treating time as one additional dimension.  Wang et
al.~\cite{wang2011utilizing,Wang-Zhan13-opportunity} 
consider the time gap between purchases and propose an opportunity
model to identify not only the items to recommend, but also the best
timing to recommend a particular product. Meanwhile, improving
temporal diversity of recommendation across
time~\cite{Lath-etal10-temporal,Zhao-etal12-increasing} is also
considered.

Another related domain is learning to rank~\cite{liu2009learning},
which is initially motivated for the problem of information retrieval
given queries.  Making recommendations by learning to rank has
attracted lots of attentions
recently~\cite{rendle2009bpr,weston2013learning,Kara-etal13-learning}.
EigenRank~\cite{Liu-Yang08-eigenrank} extends memory-based (or
similarity-based) methods by considering the ranking (rather than
rating) of items in computing user similarities.  Matrix factorization
has been extended to optimize for ranking-oriented loss as well.  But
most ranking-related metrics are non-smooth or non-convex. Hence,
majority of the methods either approximate the loss via a smooth
function or find a smooth lower/upper bound for the loss function.
For instance, CofiRank~\cite{weimer2007cofi} extends
matrix factorization to optimize ranking measures like NDCG instead of
rating measures. Because NDCG is non-convex, the authors propose a
couple of steps to find a a convex upper-bound for the non-convex
problem and adopt bundle method for optimization.
CLiMF~\cite{Shi-etal12-climf} instead optimizes a lower bound of
smooth reciprocal rank.  Our proposed method differs because we
explicitly optimizes for the exact ranking measure. This is viable
because we are learning only biases, rather than latent factors, with
the ranking loss.

Our proposed bias learning method in collaborative
filtering is partly inspired from the thresholding problem in
multi-class/label classification~\cite{Yang01-study,Fan-Jin07-study}.
For large-scale multi-class/label classification problem, one-vs-rest
is still widely used. That is, for each class we construct a binary
classifier by treating the class as positive, and the remaining
classes as negative. Since each binary classifier is constructed
independently, researchers propose to learns a threshold (bias) for
each class mainly to optimize classification accuracy, precision/recall
or F-measure.  
However, in top-$k$
recommendation, the score difference and ranking of items matter,
making all the items dependent on each other.  Also, the motivation
of this work is mainly to capture temporal dynamics rather than
calibrating the classifier prediction
scores. 

\section{Conclusions and Future Work}
This work attempts to take into account temporal dynamics for
top-$k$ recommendation. It is motivated from the observation that
certain domains, e.g.  eCommerce, are highly dynamic.  Since user
feedbacks are likely to be rare in most recommender systems, we
suggest keeping as much data as possible for training recommendation
model to avoid sparsity problem.  On the other hand, we propose to
learn a time-dependent bias for each item based on recent user
feedback only to capture the temporal trend change.  We define the
bias learning problem and present a coordinate-descent like algorithm
to optimize ranking-based measures like ACC, MAP or
NDCG. We prove that the algorithm is guaranteed to terminate in
finite steps with reasonable time complexity.  Empirical results via
both offline and online experiments demonstrate that the proposed bias
learning method is able to boost the performance of base
recommendation models, and capture the temporal shift in user
feedback.  As the bias learning works independently of base
recommendation model being used,  we encourage other practitioners to add
it as a standard module in recommender systems where temporal dynamics
are a norm. 

A couple of problems remain open.  Even though we have
provided some guidelines to reduce computational cost for bias
learning, the proposed algorithm is sequential in nature and thus
difficult to harness the power of parallel/distributed computing. Its
scalability needs to be improved.  For another, it has been shown that  one item can
be removed from recommendation if it does not appear in the recent
transactions. So bias learning would not pick new
items. 
This seems to be against the initial purpose of
recommendation,  to encourage users to discover more items in the long
tail.  It is pressing to understand more about the balance between
relevance, 
popularity and serendipity in recommendation. 

\bibliographystyle{abbrv}
\bibliography{main}

\end{document}